\documentclass[11pt,english,twoside]{article}
\usepackage{babel}
\usepackage{blindtext}
\usepackage{hyperref}
\usepackage{graphicx}
\usepackage{fancyhdr}
\usepackage{amsmath,amssymb,graphicx}
\usepackage{xcolor}
\usepackage{titlesec}
\usepackage[export]{adjustbox}
\usepackage{subcaption}
\usepackage[utf8]{inputenc}
\usepackage{float}
\usepackage{mathtools}
\usepackage{txfonts}
\usepackage{pxfonts}
\usepackage[font=scriptsize,labelfont=bf]{caption}

\usepackage{amsthm}
\usepackage{float}
\usepackage{comment}
\graphicspath{ {images/} }
\usepackage[font=scriptsize,labelfont=bf]{caption}
\titleformat{\section}{\normalsize\bfseries}{\thesection}{1em}{}
\titleformat{\subsection}{\normalsize\bfseries}{\thesubsection}{1em}{}

\newtheorem{theorem}{Theorem}
\newtheorem{lemma}{Lemma}
\newtheorem{definition}{Definition}

\newcommand{\ASI}{\mathrm{ASI}}
\newcommand{\SLP}{\mathrm{SLP}}

\usepackage{geometry}
\begin{document}

\thispagestyle{empty}
\setcounter{page}{9}

\pagestyle{fancy}
\fancyhf{}
\fancyhead{}
\fancyhead[RO,LE]{\vspace{15pt}\\Computational Complexity of Determining the Assembly Index} 
\fancyfoot{}
\fancyfoot[LE,RO]{\thepage}
\fancyfoot[RE,LO]{\url{https://ipipublishing.org/index.php/ipil/}}
\renewcommand{\headrulewidth}{0.4pt} 

\begin{minipage}{0.14\textwidth}
\includegraphics[width=0.9\textwidth]{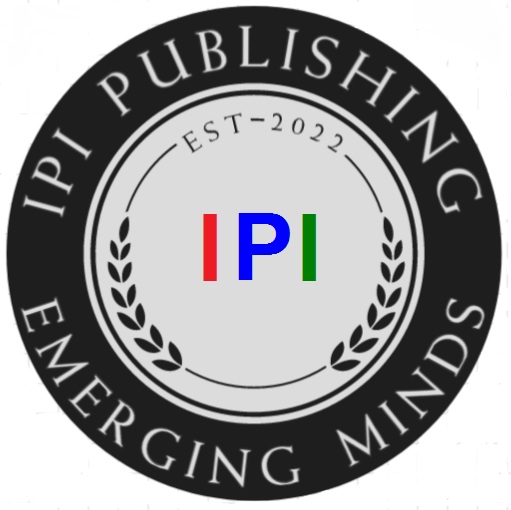} 
\end{minipage}
\hfill
\begin{minipage}{0.5\textwidth}
\includegraphics[width=1.05\textwidth]{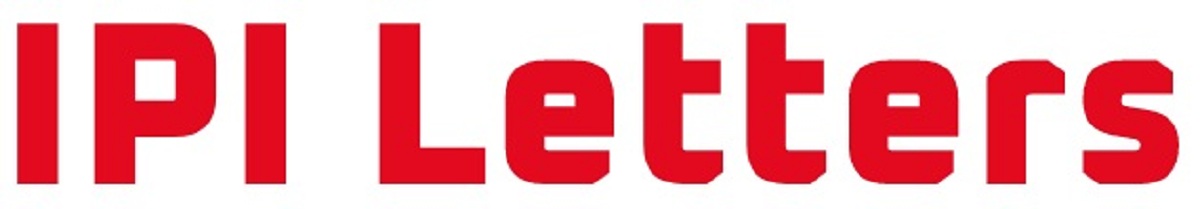}
\end{minipage}
\begin{minipage}{0.3\textwidth}
\begin{flushright}
{\scriptsize 
ISSN 2976 - 730X\\
IPI Letters 2026,Vol 4 (1):9-12\\
\href{https://doi.org/10.59973/ipil.xx}{\color{blue}{https://doi.org/10.59973/ipil.315}}\\
\medskip
Received: 2026-01-02\\
Accepted: 2026-01-15\\
Published: 2026-01-20\\
}\end{flushright}
\end{minipage}

\vspace{0.5cm}

\par\noindent\rule{\textwidth}{0.5pt}\\
{\color{red}\textbf{Article}} 

\begin{center}
\vspace{0.5cm}
  {\huge {\bf Computational Complexity of Determining the Assembly Index}}

  \vspace{0.5cm}
  
\end{center}

\noindent
{\large {\bf Piotr Masierak$^\bold{1,*}$ }

\vspace{0.1in}

\noindent
{\footnotesize $^1$Łukaszyk Patent Attorneys, ul. Głowackiego 8, 40-052 Katowice, Poland 

\vspace{0.1in}

\noindent
$^*$Corresponding author: \href{mailto:pmasierak@patent.pl} {\color{blue}{ pmasierak@patent.pl}}
}
\vspace{1cm}

\noindent
{\small {\bf Abstract} 
The assembly index of assembly theory quantifies the minimal number of composition steps required to construct an object from elementary components.
The study proves that the decision version of the assembly index problem is NP-complete, through an explicit correspondence between assembly plans and straight-line grammars.
This correspondence implies that the optimization version of the assembly index problem inherits NP- and APX-hardness from the classical smallest grammar problem.
The study provides complete, self-contained proofs for both decision and optimization variants of the assembly index problem.
These results establish that computing or approximating the assembly index is computationally intractable, placing it within the same complexity class as grammar-based compression.
}

\vspace{0.75cm}

\noindent
{\small {\bf Keywords} Assembly theory; Assembly index; Grammar-based-compression, Computational complexity, Information theory, Complexity measures, NP-completeness.}

\vspace{0.2cm}
\par\noindent\rule{\textwidth}{0.5pt}

\section{Introduction}

Assembly theory (AT) ~\cite{cronin2017, cronin2022, cronin2023} studies how complex structures emerge from simpler components via reuse of intermediates.
The assembly index (ASI) quantifies this generative complexity as the minimal number of joining steps required to construct a structure from elementary and intermediate components.
Intuitively, the ASI captures how deeply an object is embedded in a hierarchy of possible assembly pathways.
Although this concept has been widely discussed in theoretical and experimental contexts, its precise computational characterization has remained open.
It was conjectured \cite{lukaszyk_assembly_2024} that the problem of determining the assembly index is\footnote{Formally, a problem \textit{is not} a complexity class, but belongs to a complexity class.} NP-complete.
An attempt to prove this conjecture was offered in \cite{kempes_assembly_2025}. However, this proof was shown to violate the principles of AT by a predefined assembly space involving only a set of predefined assembly steps \cite{lukaszyk_assembly_2025}.
In this work, we consider the general, string-based formulation of the ASI
with binary concatenation and assembly spaces over strings with free terminals and unlimited reuse.

\section{Methods}

The smallest grammar problem (SGP) was shown to be NP-complete \cite{Charikar2005}, with additional results on APX-hardness and bounds on approximation ratios.
Furthermore, connections between SLPs and Lempel–Ziv factorization were analyzed \cite{Rytter2003} and the hardness results were extended to fixed alphabets and Chomsky normal form grammars \cite{casel_complexity_2021}.

We shall show that determining the ASI is computationally equivalent to finding the shortest straight-line program (SLP) generating $w$, a concept well known in grammar-based compression.
This equivalence allows us to transfer established hardness results from the smallest grammar problem (SGP) to the ASI, connecting these previously separate research areas.

We shall prove that the decision version of the ASI problem (\textsc{ASI-DEC}) is NP-complete and that its optimization version (\textsc{ASI-OPT}) is NP- and APX-hard.
These findings place the ASI within the same theoretical framework as grammar-based compression, showing that computing or approximating $\ASI(w)$ is computationally intractable unless $P=NP$.

\section{Results}

Let $\Sigma$ be a finite alphabet and $w \in \Sigma^+$ a word.
The initial pool contains all letters of $\Sigma$, the assembly step involves the concatenation ($\circ$) of exactly two existing words (letters or words previously formed), and each previously formed word may be reused arbitrarily many times.
The goal is to construct the word $w$, and the cost of achieving the goal is the number of assembly steps.
The $\ASI(w)$ is the minimal number of assembly steps required to construct the word $w$.

\begin{definition}[\textsc{ASI-DEC}]\label{def:ASI-DEC}
Given a pair $(w,k)$ with $w\in\Sigma^+$ and $k\in\mathbb{N}$, decide whether $\ASI(w)\le k$.
\end{definition}

The problem is existential: the input does not include the assembly plan but only asks whether such a plan exists. 
The NP witness corresponds to such a plan, and the verifier checks it in polynomial time by reconstructing the concatenation sequence. 
The distinction between the assembly steps problem and the assembly index problem is that the former concerns verifying a specific assembly sequence within a bounded number of steps, while the latter, formalized in Definition~\ref{def:ASI-DEC}, is existential, asking only whether such a sequence exists. Accordingly, its NP witness is the assembly plan itself.

\bigskip
A straight-line program (SLP) is a context-free grammar generating exactly one word $w$, with rules of the form $X \to YZ$ or $X \to a$. Its size is the number of concatenation rules $X \to YZ$.

\begin{definition}[\textsc{SLP-DEC}]\label{def:SLP-DEC}
Given a pair $(w,K)$, decide whether there exists an SLP of size less than or equal to $K$ generating $w$.  
This problem is NP-complete~\cite{Charikar2005, Rytter2003, casel_complexity_2021}.
\end{definition}

\begin{lemma}\label{lem:plan-slp}
For any word $w$, the minimal number of assembly steps equals the minimal number of concatenation rules in an SLP generating $w$, that is $\ASI(w) = \SLP(w)$.
\end{lemma}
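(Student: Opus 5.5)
The plan is to prove the two inequalities $\SLP(w)\le\ASI(w)$ and $\ASI(w)\le\SLP(w)$ by explicit translations between assembly plans and SLPs, each preserving the number of concatenations.

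\emph{Plan to SLP.} Fix an optimal assembly plan for $w$ with $k=\ASI(w)$ steps. It is a sequence $u_1,\dots,u_k$ with $u_k=w$, where each $u_i=x\circ y$ and $x,y$ are letters of $\Sigma$ or earlier words $u_j$ with $j<i$. For each letter $a$ occurring in $w$, introduce a nonterminal $A_a$ with the rule $A_a\to a$. For each step introduce a nonterminal $U_i$ with the rule $U_i\to XY$, where $X,Y$ are the nonterminals naming $x$ and $y$. The start symbol is $U_k$.

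Since every rule refers only to nonterminals with smaller index (letters counting as index $0$), the grammar is acyclic. Each nonterminal has exactly one rule, so it is a straight-line program. By induction on $i$, $U_i$ derives exactly $u_i$. The SLP therefore generates $w$ with exactly $k$ concatenation rules, which gives $\SLP(w)\le\ASI(w)$.

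One edge case needs care: $|w|=1$. Here the plan has zero steps, and the SLP is just $S\to a$ with size $0$, so the inequality still holds.

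\emph{SLP to plan.} Take an SLP of size $K=\SLP(w)$ generating $w$. Because an SLP is acyclic, its nonterminals admit a topological order in which every rule $X\to YZ$ has $Y,Z$ preceding $X$. Let $\mathrm{val}(X)$ denote the word derived from $X$. Listing the concatenation rules in this order gives the assembly steps $\mathrm{val}(X)=\mathrm{val}(Y)\circ\mathrm{val}(Z)$. Each step joins two words that are either letters, since $\mathrm{val}$ of a terminal rule is a letter in the initial pool, or words formed at earlier steps, and reuse is unlimited.

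The last step produces $\mathrm{val}(S)=w$, so this is an assembly plan with at most $K$ steps. If several nonterminals derive the same word, the duplicate steps can be dropped, which only lowers the count. Hence $\ASI(w)\le\SLP(w)$.

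\emph{Main obstacle.} The only real subtlety is matching the formal conventions on both sides. I need to confirm that the SLP size counts only binary rules, which the definition does, so terminal rules correspond to the free initial pool. I also need that an SLP is acyclic, hence topologically orderable; I would state this explicitly as a standard consequence of generating exactly one word, since a cycle would yield zero or infinitely many words. Extraneous nonterminals, whether unreachable or duplicated, only increase the count, so they never affect the minima.
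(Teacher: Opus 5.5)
Your proposal is correct and follows the same route as the paper: each assembly step $U\circ V$ becomes a rule $X\to UV$, and conversely an SLP's binary rules, listed in topological order, give a plan with the same number of concatenations. Your handling of the $|w|=1$ case, duplicate words, and extraneous nonterminals makes explicit details that the paper's short argument leaves implicit.
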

\begin{proof}
Each assembly plan of $k$ concatenations corresponds to an SLP of size $k$, where every step $U \circ V$ becomes a production $X \to U V$.  
Conversely, every SLP of size $K$ yields a $K$-step assembly plan by expanding its rules in topological order.
Both count the same number of concatenations.
\end{proof}

For example, let $w = 01010$ over binary alphabet $\Sigma = \{0, 1\}$ (the initial pool).
An optimal assembly plan containing three steps $S_i$ may have the form
\begin{align*}
S_1 &\coloneqq 0 \circ 1 = 01;\\    
S_2 &\coloneqq S_1 \circ 0 = 01 \circ 0 = 010;\\
S_3 &\coloneqq S_2 \circ S_1 = 010 \circ 01 = 01010 = w.
\end{align*}
corresponding to the SLP of size $K=3$
\begin{align*}
R_1 &\to 0 \circ 1 \quad \text{(Rule 1 corresponds to step $S_1$)}; \\
R_2 &\to R_1 0   \quad \text{(Rule 2 corresponds to step $S_2$)}; \\
R_3 &\to R_2 R_1 \quad \text{(Rule 3 corresponds to step $S_3$)}. \\
\end{align*}

\begin{lemma}\label{lem:in-np}
\textsc{ASI-DEC} $\in$ NP.
\end{lemma}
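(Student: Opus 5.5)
The plan is to exhibit a polynomial-size certificate together with a polynomial-time verifier. The certificate is an assembly plan: a sequence of steps $S_1,\dots,S_m$. Each step $S_i$ is a pair $(U_i,V_i)$, where $U_i$ and $V_i$ are references either to a letter of $\Sigma$ or to an earlier step $S_j$ with $j<i$. We intend $S_i$ to denote the word $U_i\circ V_i$.

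The first point is to bound the witness size. Without loss of generality we can take $m\le k$ and also $m\le |w|-1$. The second bound holds because $\ASI(w)\le |w|-1$ always: concatenating letters one at a time from left to right builds $w$ in $|w|-1$ steps. So if $k\ge |w|-1$ the answer is trivially yes, and we may cap $k$ at $|w|-1$. Each reference takes $O(\log(|w|+|\Sigma|))$ bits, so the whole witness has size $O(|w|\log|w|)$. This is polynomial in the input size even when $k$ is written in binary.

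The verifier then performs three checks:
\begin{itemize}
\item every reference points to a letter or to a strictly earlier step;
\item $m\le k$;
\item the last step produces $w$.
\end{itemize}

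The one subtle point is the last check. Naively expanding every $S_i$ into an explicit string could blow up exponentially, since a chain of repeated doublings yields words of length $2^m$. There are two ways to avoid this.

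\begin{enumerate}
\item Compute the lengths $\ell_i=\ell(U_i)+\ell(V_i)$ with arithmetic on $O(m)$-bit numbers. Reject as soon as any $\ell_i>|w|$, because no intermediate longer than $w$ is useful. Words longer than $w$ cannot be substrings of $w$, so any step producing one can never feed into a plan that ends in $w$; and if the final step is $w$ itself, every ancestor has length at most $|w|$. After this filter, expanding each $S_i$ explicitly costs at most $|w|$ characters per step, so the total work is $O(m|w|)=O(|w|^2)$. The verifier then compares $S_m$ with $w$ character by character.
\item Alternatively, use Lemma~\ref{lem:plan-slp} to view the plan as an SLP of size $m$. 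Checking that an SLP derives a given explicit word is standard and polynomial, for example by the same length-bounded expansion.
\end{enumerate}

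Correctness follows in both directions. If $\ASI(w)\le k$, an optimal plan has at most $\min(k,|w|-1)$ steps, all intermediates in it have length at most $|w|$, and the verifier accepts it. Conversely, any accepted certificate is a valid plan with at most $k$ steps that ends in $w$, so $\ASI(w)\le k$. I expect the only real obstacle to be this size and expansion issue; it is handled by the length bound $\ell_i\le|w|$ together with the trivial upper bound $\ASI(w)\le|w|-1$.
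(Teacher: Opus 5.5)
Your proof is correct and follows essentially the same route as the paper. Both use a pointer-encoded assembly plan as the certificate, use the trivial bound $\ASI(w)\le |w|-1$ to dispose of large binary-encoded $k$ and keep the witness at $O(|w|\log|w|)$ bits, and reconstruct the intermediates step by step in $O(|w|^2)$ time. Your explicit rejection of any step with $\ell_i>|w|$ is a genuine tightening: the paper simply asserts that every intermediate has length at most $n$, which an arbitrary certificate with doubling chains or unused steps need not satisfy. Completeness still holds under your filter, because a minimal plan has no step outside the ancestry of the final word.
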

\begin{proof}
Let the input be $(w,k)$, where $w \in \Sigma^{+}$ has length $n \coloneqq |w|$ and $k\in \mathbb{N}$ is encoded in binary.
We shall show that there exists a polynomial-time verifier whose running time is polynomial in the input size $n+\log_2 k$.

\medskip
\noindent
Any construction of a word of length $n$ from single letters using binary concatenations needs at most $n-1$ concatenation steps: starting from pieces of total length $1$, each concatenation increases the length of the newly created word by at least $1$, and the final length is $n$.
Hence, if $k \ge n-1$ then the instance is trivially YES. Otherwise, $k \le n-2$, so any feasible witness plan uses at most $t \le k=O(n)$ steps.

\medskip
\noindent
A witness is an assembly plan given as a sequence of $t$ steps $X_i \coloneqq U_i \circ V_i$ for $i=1,2,\dots,t$, where $t\le k$.
Each $U_i,V_i$ is specified \emph{by pointers/indices} either to:
(i) a terminal symbol $a \in \Sigma$ (constant-size encoding), or
(ii) one of the previously constructed words $X_j$ with $j<i$.
Thus, the witness size is $O(t\log_2 t)=O(n\log_2 n)$ bits.

\medskip
\noindent
The verifier reconstructs all intermediate strings $X_i$ in increasing order of $i$ and checks that the last constructed string equals $w$.
Each intermediate string has length at most $n$, hence the total verification time is $O(t\cdot n)\subseteq O(n^2)$ because $t\le k\le n-1$ in the nontrivial cases, which is polynomial in $n$ and hence also polynomial in the input size $n+\log_2 k$.
\end{proof}

\begin{theorem}\label{thm:np-complete-general}
$\textsc{ASI-DEC} \in \textsc{NP-complete}$ for any finite alphabet $\Sigma$.
\end{theorem}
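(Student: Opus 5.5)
The proof has two halves: membership in NP and NP-hardness. Membership is exactly Lemma~\ref{lem:in-np}, so all remaining work is hardness.

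For hardness I will give a polynomial-time many-one reduction from \textsc{SLP-DEC}, which Definition~\ref{def:SLP-DEC} records as NP-complete. The reduction is the identity on the word and the budget: map $(w,K)\mapsto(w,k)$ with $k=K$. This clearly runs in linear time. Correctness follows from Lemma~\ref{lem:plan-slp}, since $\ASI(w)=\SLP(w)$ gives
\[
\exists\ \text{SLP of size}\le K \text{ for } w \iff \SLP(w)\le K \iff \ASI(w)\le K.
\]
So a YES instance of \textsc{SLP-DEC} maps to a YES instance of \textsc{ASI-DEC}, and likewise for NO instances. Combining this with Lemma~\ref{lem:in-np} yields NP-completeness.

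Two points need care for the reduction to be sound.

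\textbf{The SLP size convention.} The SLP size used in the cited hardness results must be the same quantity that Lemma~\ref{lem:plan-slp} compares with $\ASI$. Here that quantity is the number of binary rules $X\to YZ$ in Chomsky-normal-form SLPs, with terminal rules free, mirroring the free initial pool $\Sigma$. Charikar et al.\ measure grammar size as total right-hand-side length over arbitrary-arity rules. I will therefore appeal to the Chomsky-normal-form version of Casel et al., who show hardness when counting rules of a CNF/SLP grammar. If their count also includes the terminal rules $X\to a$, the two measures differ by the number of distinct letters occurring in $w$. In that case the reduction becomes $k=K-|\mathrm{alph}(w)|$, which is still computable in polynomial time, and the equivalence is unchanged.

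\textbf{The phrase ``for any finite alphabet $\Sigma$''.} This is the step I expect to be the main obstacle. The reduction preserves the alphabet, so hardness of \textsc{ASI-DEC} over a fixed $\Sigma$ needs hardness of \textsc{SLP-DEC} over that same $\Sigma$. Casel et al.\ establish hardness of the smallest-grammar problem for fixed alphabets, down to binary in the relevant normal form. I will cite that result and check that it covers the CNF rule-count measure. If it only covers some fixed alphabet size $c$, I will extend to every alphabet of size at least $c$ by observing that enlarging $\Sigma$ with letters absent from $w$ changes neither $\ASI(w)$ nor $\SLP(w)$.

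The unary alphabet needs separate treatment, because there the problem reduces to addition chains, whose hardness status is different. If the cited results do not cover it, the honest formulation is that the theorem holds for every alphabet with $|\Sigma|\ge 2$. I will state the theorem that way, noting the unary case explicitly as an exception.
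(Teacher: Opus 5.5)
Your argument is the paper's proof: membership from Lemma~\ref{lem:in-np}, and hardness via the identity map $(w,K)\mapsto(w,K)$, justified by $\ASI(w)=\SLP(w)$ from Lemma~\ref{lem:plan-slp}. The paper simply cites \textsc{SLP-DEC} as NP-complete and never addresses either issue you flag. Your version is therefore more careful than the original, and both flags are warranted.

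Your unary exception is necessary, not merely cautious. For $|\Sigma|=1$, $\ASI(a^n)$ is the shortest addition-chain length of $n$. Every instance with $k\ge n-1$ is trivially YES, so unary \textsc{ASI-DEC} is polynomially equivalent to the sparse set $\{(a^n,k): k<n-1,\ \ASI(a^n)\le k\}$. By Mahaney's theorem it therefore cannot be NP-complete unless $\mathrm{P}=\mathrm{NP}$. So the theorem as stated (``for any finite alphabet'') is not provable and has to exclude $|\Sigma|=1$.

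Your plan does overstate what the citation gives, though. To my knowledge, Casel et al.\ prove fixed-alphabet hardness only for alphabets of size at least $17$. Their measure is total right-hand-side length (and a variant that also counts rules), not the CNF rule count, and they do not cover binary alphabets. The quantity you need is $\SLP(w)=\min_G\sum_{A\to\alpha}(|\alpha|-1)$, since binarizing a rule of length $\ell$ costs $\ell-1$ binary rules. That is neither of their measures, so the transfer to \textsc{SLP-DEC} in the paper's sense still needs its own argument or a different source.

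What your reduction actually delivers is NP-completeness for every $\Sigma$ with $|\Sigma|\ge c$. Here $c$ is the smallest alphabet size at which hardness of the CNF-count problem is established. Your padding argument (adding letters absent from $w$) correctly handles all larger alphabets.
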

\begin{proof}
From Lemma~\ref{lem:in-np} we know that \textsc{ASI-DEC} $\in$ NP.  
For NP-hardness, we reduce \textsc{SLP-DEC} to \textsc{ASI-DEC}: by Lemma~\ref{lem:plan-slp}, for every instance $(w, K)$, we define $(w, k \coloneqq K)$, which leads to
\begin{align*}
(w,K) &\in \text{\textsc{SLP-DEC}} 
\iff \SLP(w)\le K 
\iff \ASI(w)\le K 
\iff (w,k)\in \text{\textsc{ASI-DEC}} .
\end{align*}
The reduction is identity-based and computable in $O(|w|)$ time, proving the NP-hardness.
\end{proof}

The reduction is purely existential and does not require an explicit plan or grammar.
It shows that the existence of a grammar of size lower than or equal $K$ is equivalent to the existence of an assembly plan of cost lower than or equal $K$.
Therefore, determining $\ASI(w)$ lies in the same computational class as finding the smallest SLP.

Theorem~\ref{thm:np-complete-general} shows that determining whether a word can be assembled within $k$ concatenations is, therefore, in general, computationally intractable unless $P = NP$.

\bigskip
For a given word $w$, we define the ASI optimization version:
\begin{definition}[\textsc{ASI-OPT}]\label{def:ASI-OPT}
Given $w\in\Sigma^+$, minimize the number of its assembly steps to reach $\ASI(w)$.
\end{definition}

Let $\SLP(w)$ denote the minimal number of concatenation rules in an SLP generating $w$.

\begin{theorem}\label{thm:aiopt-hard}
The optimization version of the ASI problem given by Definition~\ref{def:ASI-OPT} is NP-hard and APX-hard.
Computing $\ASI(w)$ cannot be done in polynomial time, and no polynomial-time approximation scheme exists unless $P=NP$.
\end{theorem}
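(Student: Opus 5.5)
The plan is to transfer hardness from the smallest grammar problem through Lemma~\ref{lem:plan-slp}. By that lemma, $\ASI(w)=\SLP(w)$ for every word $w$. Moreover, the translation between plans and SLPs is constructive and size-preserving in both directions: a plan of $k$ steps gives an SLP of size $k$, and an SLP of size $K$, after a topological sort of its rules, gives a plan of $K$ steps. Both translations take polynomial time. So \textsc{ASI-OPT} and the smallest-SLP problem have the same instances, the same feasible solutions up to this bijection, and the same objective values. The identity map on instances, together with the translation of solutions, is therefore an approximation-preserving reduction: an L-reduction with constants $\alpha=\beta=1$, and equally a strict/AP-reduction.

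\textbf{NP-hardness.} Suppose an algorithm computes $\ASI(w)$ in polynomial time. Then \textsc{ASI-DEC} is decided by computing $\ASI(w)$ and comparing it with $k$. By Theorem~\ref{thm:np-complete-general}, \textsc{ASI-DEC} is NP-complete, so this would give $P=NP$. Hence \textsc{ASI-OPT} is NP-hard, and $\ASI(w)$ cannot be computed in polynomial time unless $P=NP$.

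\textbf{APX-hardness.} Here I would invoke the result of \cite{Charikar2005} that the smallest grammar problem is APX-hard: there is a constant $c>1$ such that no polynomial-time algorithm achieves ratio below $c$ unless $P=NP$. For fixed alphabets and binary (Chomsky normal form) rules, the corresponding statement is in \cite{casel_complexity_2021}. Now suppose an algorithm $A$ returns an assembly plan of cost at most $\rho\cdot\ASI(w)$. Convert that plan into an SLP of the same size. Its size is then at most $\rho\cdot\SLP(w)$, so $A$ yields a $\rho$-approximation for smallest SLP. Consequently no ratio below $c$ is achievable for \textsc{ASI-OPT}, \textsc{ASI-OPT} is APX-hard, and in particular it admits no PTAS unless $P=NP$.

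\textbf{Main obstacle.} The delicate step is matching the size measure used in the cited APX-hardness result with ours, which counts only binary rules $X\to YZ$. Charikar et al.\ measure grammar size as the total length of right-hand sides, with rules of arbitrary arity. For a binary SLP whose terminal rules are free, their size $|G|$ satisfies $|G| = 2\,\SLP(w) + |\Sigma_w|$ up to the encoding of terminal rules. Converting an arbitrary-arity grammar to binary form changes size by at most a constant factor. I would therefore first try to cite the hardness statement already formulated for binary/CNF grammars in \cite{casel_complexity_2021}, where rule counting matches ours directly. If only the Charikar measure is available, I would check that the hard instances in their reduction (from vertex cover on bounded-degree graphs) have optimum $\Theta(|w|)$ with alphabet size bounded by a constant times the optimum. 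That makes the affine relation between the two measures an L-reduction, with constant-factor loss in the gap, so APX-hardness survives.
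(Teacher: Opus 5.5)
Your architecture matches the paper's proof. The identity map is cost-preserving by Lemma~\ref{lem:plan-slp}, plans and SLPs translate in polynomial time, and deriving NP-hardness from Theorem~\ref{thm:np-complete-general} is fine. The gap is exactly the one you flag, and the bridge you propose across it does not work.

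There is no affine relation between Charikar et al.'s optimum $g(w)$ (total right-hand-side length, rules of any arity) and $\SLP(w)$. Binarizing a rule of length $m$ costs $m-1$ binary rules, so one only has $\SLP(w)<g(w)\le 2\,\SLP(w)$. Your formula $|G|=2\,\SLP(w)+|\Sigma_w|$ gives the Charikar size of an optimal \emph{binary} grammar, not $g(w)$, which uses long rules. A factor-$2$ sandwich does not transfer APX-hardness, and the conditions you propose to check (optimum $\Theta(|w|)$, alphabet size $O(\mathrm{OPT})$) do not repair this.

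Charikar et al.'s own instances show it. Take
\[
\sigma=\prod_{i}\bigl(\#v_i\diamond\,v_i\#\diamond\bigr)^2\,\prod_{i}\bigl(\#v_i\#\diamond\bigr)\prod_{(v_i,v_j)\in E}\bigl(\#v_i\#v_j\#\diamond\bigr),
\]
where $\#$ is an extra letter and the $\diamond$'s are pairwise distinct. These instances satisfy both of your conditions, and Charikar et al.\ show $g(\sigma)=15|V|+3|E|+\tau(H)$, with $\tau$ the vertex cover number. In the rule-count measure the picture is different:
\begin{itemize}
\item Every nonterminal whose expansion contains a $\diamond$ occurs once in the derivation tree. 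So these rules number exactly the pieces of the induced factorization of $\sigma$, minus one.
\item The rule $P_i\to N_i\#$ (with $N_i\to\#v_i$) pays for itself on the block $\#v_i\#\diamond$: one extra rule, one fewer piece.
\item Taking all $P_i$ therefore covers every edge at no cost. A short count gives $\SLP(\sigma)=13|V|+3|E|-1$ for every graph.
\end{itemize}
So the hard instances for Charikar's measure carry no vertex-cover information in the measure of this paper.

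Hence APX-hardness must be proved for the binary-rule count itself, or cited from a source that states it for that measure. Charikar et al.\ do not state it for this measure, and the paper's proof, which cites the same works, rests on the same unverified point.

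A direct fix is to drop the blocks $\#v_i\#\diamond$. The same counting gives $\SLP(\sigma')=10|V|+3|E|-1+\tau(H)$. From any SLP of size $s$ you can read off, in polynomial time, a vertex cover of size at most $s-(10|V|+3|E|-1)$: take the $i$ having a nonterminal that derives $\#v_i\#$, plus one endpoint of each edge they leave uncovered. For graphs of maximum degree $3$ without isolated vertices, $\tau(H)\ge|E|/3$ and $|V|\le 2|E|$. So this is an L-reduction from bounded-degree vertex cover. It gives APX-hardness of \textsc{SLP-OPT} in the rule-count measure, and also NP-hardness, which is what Theorem~\ref{thm:np-complete-general} actually needs. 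Your identity reduction then completes the proof.
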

\begin{proof}
Let \textsc{SLP-OPT} denote the optimization problem of finding the smallest SLP generating $w$.  
\textsc{SLP-OPT} is NP-hard and APX-hard~\cite{Charikar2005, Rytter2003, casel_complexity_2021}.  
By Lemma~\ref{lem:plan-slp}, $\ASI(w)=\SLP(w)$ for all $w$, so the identity mapping $f(w)=w$ defines a cost-preserving polynomial reduction.  
Therefore, both NP- and APX-hardness results transfer directly to \textsc{ASI-OPT}.
\end{proof}

\section{Conclusions}

We have shown that determining the ASI is NP-complete (decision version) and NP-/APX-hard (optimization version) for an arbitrary finite alphabet.
The equality $\ASI(w)=\SLP(w)$ places ASI within the same complexity framework as the SGP.  
Consequently, computing or approximating $\ASI(w)$ is as difficult as finding the smallest grammar generating $w$.

\section*{Acknowledgements}

I thank my partners Wawrzyniec Bieniawski and Szymon Łukaszyk for their clarifications, formal corrections, and improvements.

\bibliographystyle{vancouver}
\bibliography{main}

\end{document}